\begin{document}

\newcommand{\ga}{{\alpha}}
\newcommand{\gb}{{\beta}}
\newcommand{\gc}{{\chi}}
\newcommand{\gd}{{\delta}}
\newcommand{\gep}{{\epsilon}}
\newcommand{\gvare}{\varepsilon}
\newcommand{\gga}{{\gamma}}
\newcommand{\gk}{{\kappa}}
\newcommand{\gl}{{\lambda}}
\newcommand{\gm}{{\mu}}
\newcommand{\gn}{{\nu}}
\newcommand{\go}{{\omega}}
\newcommand{\gp}{{\pi}}
\newcommand{\gph}{{\phi}}
\newcommand{\gvp}{{\varphi}}
\newcommand{\gps}{{\psi}}
\newcommand{\gth}{{\theta}}
\newcommand{\gr}{{\rho}}
\newcommand{\gs}{{\sigma}}
\newcommand{\gt}{{\tau}}
\newcommand{\gx}{{\xi}}
\newcommand{\gz}{{\zeta}}
\newcommand{\gE}{{\Upsilon}}
\newcommand{\gGa}{{\Gamma}}
\newcommand{\gL}{{\Lambda}}
\newcommand{\gO}{{\Omega}}
\newcommand{\gP}{{\Pi}}
\newcommand{\gPh}{{\Phi}}
\newcommand{\gPs}{{\Psi}}
\newcommand{\gTh}{{\Theta}}
\newcommand{\gS}{{\Sigma}}
\newcommand{\gX}{{\Xi}}
\newcommand{\gU}{{\Upsilon}}

\newcommand{\fa}{{\bf a}}
\newcommand{\fb}{{\bf b}}
\newcommand{\fc}{{\bf c}}
\newcommand{\fd}{{\bf d}}
\newcommand{\fe}{{\bf e}}
\newcommand{\ff}{{\bf f}}
\newcommand{\fk}{{\bf k}}
\newcommand{\fm}{{\bf m}}
\newcommand{\fn}{{\bf n}}
\newcommand{\fo}{{\bf o}}
\newcommand{\fp}{{\bf p}}
\newcommand{\fq}{{\bf q}}
\newcommand{\fr}{{\bf r}}
\newcommand{\fs}{{\bf s}}
\newcommand{\ft}{{\bf t}}
\newcommand{\fu}{{\bf u}}
\newcommand{\fv}{{\bf v}}
\newcommand{\fw}{{\bf w}}
\newcommand{\fx}{{\bf x}}
\newcommand{\fy}{{\bf y}}
\newcommand{\fz}{{\bf z}}

\newcommand{\fA}{{\bf A}}
\newcommand{\fB}{{\bf B}}
\newcommand{\fC}{{\bf C}}
\newcommand{\fD}{{\bf D}}
\newcommand{\fE}{{\bf E}}
\newcommand{\fF}{{\bf F}}
\newcommand{\fG}{{\bf G}}
\newcommand{\fI}{{\bf I}}
\newcommand{\fJ}{{\bf J}}
\newcommand{\fK}{{\bf K}}
\newcommand{\fL}{{\bf L}}
\newcommand{\fM}{{\bf M}}
\newcommand{\fN}{{\bf N}}
\newcommand{\fO}{{\bf O}}
\newcommand{\fP}{{\bf P}}
\newcommand{\fQ}{{\bf Q}}
\newcommand{\fR}{{\bf R}}
\newcommand{\fS}{{\bf S}}
\newcommand{\fT}{{\bf T}}
\newcommand{\fU}{{\bf U}}
\newcommand{\fV}{{\bf V}}
\newcommand{\fW}{{\bf W}}
\newcommand{\fX}{{\bf X}}
\newcommand{\fY}{{\bf Y}}
\newcommand{\fZ}{{\bf Z}}

\newcommand{\cB}{{\cal B}}
\newcommand{\cE}{{\cal E}}
\newcommand{\cG}{{\cal G}}
\newcommand{\cS}{{\cal S}}
\newcommand{\cF}{{\cal F}}
\newcommand{\cT}{{\cal T}}
\newcommand{\cD}{{\cal D}}
\newcommand{\cI}{{\cal I}}
\newcommand{\cK}{{\cal K}}
\newcommand{\cL}{{\cal L}}
\newcommand{\cM}{{\cal M}}
\newcommand{\cN}{{\cal N}}
\newcommand{\cU}{{\cal U}}
\newcommand{\cW}{{\cal W}}
\newcommand{\cX}{{\cal X}}
\newcommand{\cY}{{\cal Y}}
\newcommand{\cZ}{{\cal Z}}

 \newcommand{\fga}{\mbox{\boldmath $\alpha$}}
  \newcommand{\fgb}{\mbox{\boldmath $\beta$}}
  \newcommand{\fgd}{\mbox{\boldmath $\delta$}}
 \newcommand{\fgg}{\mbox{\boldmath $\gamma$}}
  \newcommand{\fgl}{\mbox{\boldmath $\lambda$}}
  \newcommand{\fgm}{\mbox{\boldmath $\mu$}}
 \newcommand{\fgs}{\mbox{\boldmath $\sigma$}}
 \newcommand{\fgth}{\mbox{\boldmath $\theta$}}
  \newcommand{\fgt}{\mbox{\boldmath $\tau$}}
  \newcommand{\fgz}{\mbox{\boldmath $\zeta$}}
  \newcommand{\fgps}{\mbox{\boldmath $\psi$}}
  \newcommand{\fge}{\mbox{\boldmath $\eta$}}
  \newcommand{\fgga}{\mbox{\boldmath $\gamma$}}
  \newcommand{\fgv}{\mbox{\boldmath $\varphi$}}
  \newcommand{\fgp}{\mbox{\boldmath $\pi$}}
  \newcommand{\fvare}{\mbox{\boldmath $\varepsilon$}}
  \newcommand{\fgD}{{\bf \Delta}}
  \newcommand{\fgG}{{\bf \Gamma}}
  \newcommand{\fgL}{{\bf \Lambda}}
  \newcommand{\fgPh}{{\bf \Phi}}
  \newcommand{\fgPi}{{\bf \Pi}}
  \newcommand{\fgPs}{{\bf \Psi}}
  \newcommand{\fgS}{{\bf \Sigma}}
  \newcommand{\fgTh}{{\bf \Theta}}
  \newcommand{\fgO}{{\bf \Omega}}
  \newcommand{\fgX}{{\bf \Xi}}
  \newcommand{\fgU}{{\bf \Upsilon}}

\newcommand{\Mean}{{\mbox{E}}}
\newcommand{\Cov}{{\mbox{cov}}}
\newcommand{\Var}{{\mbox{var}}}
\newcommand{\Corr}{{\mbox{corr}}}
\newcommand{\diag}{{\mbox{diag}}}
\newcommand{\prob}{{\mbox{Pr}}}

\newcommand{\Nul}{{\bf 0}}
\newcommand{\One}{{\bf 1}}
\newcommand{\Bd}{\B_{\bullet}}
\newcommand{\Xd}{\X_{\bullet}}
\newcommand{\Zd}{\Z_{\bullet}}
\newcommand{\Nor}{N}
\newcommand{\f}{\textbf}
\newcommand{\Real}{{\cal R}}
\newcommand{\Natural}{{\cal N}}
\newtheorem{thm}{Theorem}%[section]
\newtheorem{cor}[thm]{Corollary}
\newtheorem{lem}[thm]{Lemma}
\newtheorem{prop}[thm]{Proposition}
\newtheorem{ax}{Axiom}
\theoremstyle{definition}
\newtheorem{defn}{Definition}[section]
\theoremstyle{remark}
\newtheorem{rem}{Remark}[section]
\newtheorem*{notation}{Notation}
%\numberwithin{equation}{section}
%\newcommand{\thmref}[1]{Theorem~\ref{#1}}
\newcommand{\secref}[1]{\S\ref{#1}}
\newcommand{\bysame}{\mbox{\rule{3em}{.4pt}}\,}
\newcommand{\A}{\mathcal{A}}
\newcommand{\B}{\mathcal{B}}
\newcommand{\st}{\sigma}
\newcommand{\XcY}{{(X,Y)}}
\newcommand{\SX}{{S_X}}
\newcommand{\SY}{{S_Y}}
\newcommand{\SXY}{{S_{X,Y}}}
\newcommand{\SXgYy}{{S_{X|Y}(y)}}
\newcommand{\Cw}[1]{{\hat C_#1(X|Y)}}
\newcommand{\G}{{G(X|Y)}}
\newcommand{\PY}{{P_{\mathcal{Y}}}}
\newcommand{\X}{\mathcal{X}}
\newcommand{\wt}{\widetilde}
\newcommand{\wh}{\widehat}

\newcommand{\bY}{\mathbf{Y}}
\newcommand{\bV}{\mathbf{V}}
\newcommand{\bG}{\mathbf{G}}
\newcommand{\bU}{\mathbf{U}}
\newcommand{\bM}{\mathbf{M}}
\newcommand{\bH}{\mathbf{H}}
\newcommand{\bD}{\mathbf{D}}
\newcommand{\bI}{\mathbf{I}}
\newcommand{\by}{\mathbf{y}}
\newcommand{\bx}{\mathbf{x}}
\newcommand{\bv}{\mathbf{v}}
\newcommand{\bz}{\mathbf{z}}
\newcommand{\btheta}{\boldsymbol{\theta}}
\newcommand{\bpsi}{\boldsymbol{\psi}}
\newcommand{\bX}{\mathbf{X}}
\newcommand{\bepsilon}{\boldsymbol{\epsilon}}
\newcommand{\bbeta}{\boldsymbol{\beta}}

\begin{center}
{\LARGE\bf Composite likelihood estimation of sparse Gaussian graphical models with symmetry }
\medskip
\medskip
\medskip

% last updated 3/12/09

B{\scriptsize Y} XIN GAO\\
{\it Department of Mathematics and Statistics, York University, Toronto, Onatrio \\ Canada M3J 1P3} \\
xingao@mathstat.yorku.ca\\
{\scriptsize AND} HELENE MASSAM\\
{\it Department of Mathematics and Statistics, York University, Toronto, Onatrio \\ Canada M3J 1P3} \\
massamh@mathstat.yorku.ca
\end{center}

\newpage

\centerline{ABSTRACT}

\noindent

In this article, we discuss the composite likelihood estimation of sparse Gaussian graphical models. When there are symmetry constraints on the concentration matrix or partial correlation matrix, the likelihood estimation can be computational intensive. The composite likelihood offers an alternative formulation of the objective function  and yields consistent estimators. When a sparse model is considered, the penalized composite likelihood estimation can yield estimates satisfying both the symmetry and sparsity constraints and possess ORACLE property. Application of the proposed method is demonstrated through simulation studies and a network analysis of a biological data set.

\medskip
\noindent \emph{Key words}: Variable selection; model selection; penalized estimation; Gaussian graphical model; concentration matrix; partial correlation matrix

\begin{center}
1. I{\scriptsize NTRODUCTION}
\end{center}

A multivariate Gaussian graphical model is also known as 
covariance selection model. The conditional independence
relationships between the random variables are equivalent to 
specified zeros among the inverse covariance matrix. More exactly,
let $X = (X^{(1)},. . . ,X^{(p)})$ be a $p$-dimensional random
vector following a multivariate normal distribution
$N_p(\mu,\Sigma),$ with $\mu$ denoting the unknown mean and
$\Sigma$ denoting the nonsingular covariance matrix. Denote the
inverse covariance matrix as $\Sigma^{-1}=C=(C_{ij})_{1\leq
i,j\leq p}.$ Zero entries $C_{ij}$ in the inverse covariance
matrix indicate conditional independence between the random variables $X^{(i)}$ and $X^{(j)}$ given all other variables
(Dempster (1972), Whittaker (1990), Lauritzen (1996)). The Gaussian
random vector $X$ can be represented by an undirected graph
$G=(V,E),$ where $V$ contains $p$ vertices corresponding to the
$p$ coordinates and the edges $E=(e_{ij})_{1\leq i < j \leq p}$
represent the conditional dependency relationships between
variables $X^{(i)}$ and $X^{(j)}.$ It is of interest to identify
the correct set of edges, and estimate the parameters in the
inverse covariance matrix simultaneously.

To address this problem, many methods have been developed. In general, there are no zero entries in the maximum
likelihood estimate, which results in a full graphical structure.
Dempster (1972) and Edwards (2000) proposed to use penalized
likelihood with the $L_0$-type penalty
$p_{\lambda}(|c_{ij}|)_{i\neq j}=\lambda I(|c_{ij}|\neq 0)$, where
$I(.)$ is the indicator function. Since the $L_0$ penalty is
discontinuous, the resulting penalized likelihood estimator is
unstable. Another approach is stepwise forward selection or backward
elimination of the edges. However, this ignores the
stochastic errors inherited in the multiple stages of the
procedure (Edwards (2000)) and the statistical properties of
the method are hard to comprehend.  Meinshausen and
B\"{u}hlmann (2006) proposed a computationally attractive method
for covariance selection; it performs the
neighborhood selection for each node and combines the results to
learn the overall graphical structure. Yuan and Lin (2007) proposed
penalized likelihood methods for estimating the concentration
matrix with the $L_1$ penalty (LASSO) (Tibshirani (1996)). Banerjee, Ghaoui, and D'aspremont (2007) proposed a block-wise updating algorithm for the estimation of the inverse covariance
matrix. Further in this line,
Friedman, Hastie, and Tibshirani (2008) proposed the graphical
LASSO algorithm to estimate the sparse inverse covariance matrix
using the LASSO penalty through a coordinate-wise updating scheme. Fan, Feng, and Wu (2009) proposed to estimate
the inverse covariance matrix using the adaptive LASSO and the Smoothly
Clipped Absolute Deviation (SCAD) penalty to attenuate the bias
problem. Friedman, Hastie and Tibshirani (2012) proposed to use composite likelihood based on conditional likelihood to estimate sparse graphical models.

In real applications, there often exists symmetry constraints on the underlying Gaussian graphical model. For example, genes belong to the same functional or structure group may behave in a similar manner and thus share similar network properties.  In the analysis of high-dimensional data, clustering algorithm is often performed to reduce the dimensionality of the data. Variates in the same cluster exhibit similar patterns.  This may result in restrictions on the graphical gaussian models: equality among sepcified elements of the concentration matrix or equality emong specific partial variances and correlations. Adding symmetry to the graphical model reduces the number of parameters. When both sparsity and symmetry exisits, the likelihood estimation becomes computationally challenging. 

Hojsgaard and Lauritzen (2009) introduced new types of Guassian models with symmetry constraints. When the restriction is imposed on the inverse convariance matrix, the model is referred as RCON model. When the restriction is imposed on the partial correlation matrix, the model is referred as RCOR model. Likelihood estimation on both models can be obtained through Newton iteration or partial maximization. However, the algorithm involves the inversion of concentration matrix in the interation steps, which can be computationally costly in the analysis of large matrices. When sparsity constrainst is imposed on the RCON and RCOR model, the likelihood is added extra penalty terms on the sizes of the edges. Solving the penalized likelihood with both sparsity and symmetry constraint is a challenge. In this article, we investigate the alternative way of formulating the likelihood. We propose to use composite likelihood as our objective function and maximize the penalized composite likelihood to obtain the sparse RCON and RCOR model. The algorithm is designed based on co-ordinate descent and soft thresholding rules. The algorithm is computationally convenient and it avoids any operations of large matrix inverison.

The rest of the article is organized as follows. In Section 2.1 we
formulate the penalized likelihood function for the RCON and RCOR modle
matrix. In Sections 2.2 and 2.3, we present the coordinate descent algorithm and soft thresholding rule. In Section 3, we
investigate the asymptotic behavior of the estimate and establish the ORACLE property of the estimate. In Section 4, simulation studies are presented
to demonstrate the empirical performance of the estimate in terms of estimation and model selection. In Section 5, we applied our method to a clustered microarray data set to estimate the networks between the clustered genes and also compare the networks under different treatment settings.

\begin{center}
2. M{\scriptsize ETHOD}
\end{center}
\begin{center}
2.1  C{\scriptsize OMPOSITE LIKELIHOOD}
\end{center}
The estimation of Gaussian graphical model has been mainly based on likelihood method. An alternative method of estimation based on composite likelihood has
drawn much attention in recent years.  It has been demonstrated to possess good
theoretical properties, such as consistency for the parameter
estimation, and can be utilized to establish hypothesis testing
procedures.  Let $x=(x_1,\ldots,x_n)^T$ be the vector of $n$
variables observed from a single observation. Let $\{f(x;\phi),x\in
\mathcal{X},\phi \in \Psi\}$ be a class of parametric models,
with $\mathcal{X}\subseteq \mathcal{R}^n,$ $\Psi \subseteq
\mathcal{R}^q,$ $n\geq 1,$ and $q \geq 1.$ For a subset of
$\{1,\ldots,n\}$, say $a$, $x_a$ denotes a subvector of $x$
with components indexed by the elements in set $a;$ for instance,
given a set $a=\{1,2\}$, $x_{a} = (x_1,x_2)^T$. Let $\phi
=(\theta,\eta)$, where $\theta \in \Theta \subseteq \mathcal{R}^p$, $p
\leq q$, is the parameter of interest, and $\eta$ is the nuisance
parameter. According to Lindsay (1988), the CL of a single
vector-valued observation is $ L_c(\theta;x)=\prod_{a \in
A}L_a(\theta;x_a)^{w_a}, $ where $A$ is a collection of index
subsets called the composite sets,
$L_a(\theta;x_a)=f_a(x_a;\theta_a),$ and $\{w_a,a\in A\}$ is a
set of positive weights. Here $f_a$ denotes all the different
marginal densities and $\theta_a$ indicates the parameters that
are identifiable in the marginal density $f_a.$  

As the composite score function is a linear combination of several
valid likelihood score functions, it is unbiased under the usual
regularity conditions.  Therefore, even though the composite
likelihood is not a real likelihood, the maximum composite
likelihood estimate is still consistent for the true parameter.
The asymptotic covariance matrix of the maximum composite
likelihood estimator takes the form of the inverse of the Godambe
information:$
H(\theta)^T J(\theta)^{-1}H(\theta),
$
where $H(\theta)=E\{- \sum_{a\in A} \partial^2 \log
f(x_a;\theta)/\partial \theta \partial \theta^T\}$ and
$J(\theta)=\text{var}\{\sum_{a\in A} \partial \log f(x_a;
\theta)/\partial \theta\}$ are the sensitivity matrix and the
variability matrix, respectively. Readers are referred
to Cox and Reid (2004) and Varin (2008) for a more detailed
discussion on the asymptotic behavior of the maximum composite
likelihood estimator.

\begin{center}
2.1  C{\scriptsize OMPOSITE LIKELIHOOD ESTIMATION OF RCON MODEL}
\end{center}
Let data $X$ consist of $n$ replications of a multivariate random vector of size $p$: $X=(X_1,X_2,\dots,X_n)^T,$ with $X_i=(X_{i1},X_{i2},\dots,X_{ip})^T$ following a $N_p(\mu,\Sigma)$ distribution. For simplicity of exposition, we assume throughout that $\mu=0.$  We let $\theta=\Sigma^{-1}$ denote the inverse covariance, also known as the concentration matrix with elements $(\theta_{ij})$, $1\leq i,j,\leq p.$ The partial correlation between $X_{ij}$ and $X_{ik}$ given all other variables is then
$$
\rho_{jk}=-\theta_{jk}/\sqrt{\theta_{jj}\theta_{kk}}.
$$
It can be shown than $\theta_{jk}=0$ if and only if $X_{ij}$ and $X_{ik}$ are conditionally independent given all other variables.

There are different symmetry restrictions on cencentrations first introduced by Hojsgaard and Lauritzen (2009). An $RCON(\mathcal{V},\mathcal{E})$ model with vertex coloring $\mathcal{V}$ and edge coloring $\mathcal{E}$ is obtained by restricting the elements of the concentration matrix $\theta$ as follows: 1) Diagonal elements of the concentration matrix $\theta$  corresponding to vertices in the same vertex colour class must be identical. 2) Off diagonal entries of $\theta$ corresponding to edges in the same edge colour class must be identical. Let $\mathcal{V}=\{V_1,\dots,V_k\}$, where $V_1,\dots,V_k$ is a partition of $\{1,\dots,p\}$ vertex class. Let $\mathcal{E}=\{E_1,\dots,E_l\}$, where $E_1,\dots,E_l$ is a partition of $\{(i,j),1\leq i<j \leq p\}$ edge class. This implies given an edge color class, for all edges $(i,j)\in E_s,$ $\theta_{ij}$ are all equal and hence denoted as $\theta_{E_s}.$ This also implies given a vertex color class, for all vertices $(i)\in V_m,$ $\theta_{ii}$ are all equal and hence denoted as $\theta_{V_m},$ $\sigma^{ii}$ are all equal and hence denoted as $\sigma_{V_m},$

Following the approach of Friedman, Hastie and Tibshirani (2012), we formulate composite conditional likelihood to estimate sparse graphical model under symmetry constraints.  The conditional distribution of $x_{ij}|x_{-ij}=N(\sum_{k \neq j } x_{ik}\beta_{kj}, \sigma^{jj}), $ where $x_{-ij}=(x_{i1},x_{i2},\dots,x_{i,j-1},x_{j+1},\dots,x_{ip}),$ $\beta_{kj}=-\theta_{kj}/\theta_{jj},$ and $\sigma^{jj}=1/\theta_{jj}.$ The negative composite log-likelihood can be formulated as
$$
\ell_c(\theta)=\frac 1 2 \sum_{j=1}^p(N\log \sigma^{jj}+\frac 1 {\sigma^{jj}}||X_j-XB_{j}||_2^2),
$$
where $B_j$ is a $p-$vector with elements $\beta_{ij},$ except a zero at the j$th$ position, and $B=(B_1,B_2,\dots,B_p).$ We propose to estimate the sparse RCON model by minimizing the following penalized composite loglikelihood $Q(\theta)$:
$$
\min_{\theta_{E_s},1\leq s \leq l,\theta_{V_m},1\leq m \leq k}\ell_c(\theta)+n\lambda\sum_{s}|\theta_{E_s}|
.$$

We employ coordinate-descent algorithm by solving the penalized minimization one coordinate at a time. It can be shown that the negative expected Hessian matrix of $\ell_c(\theta)$ is positive definite because it is the sum of expected negative Hessian matrices of all conditional likelihoods: 
\begin{align}
\begin{split}
&E(\frac{-\partial^2 \ell_c(\theta)}{\partial \theta^2}) =\sum_{i=1}^n \sum_{j=1}^p E(\frac{\partial^2 l(x_{ij}|x_{-ij})}{\partial \theta^2})\\
=& \sum_{i=1}^n \sum_{j=1}^p E(E(\frac{\partial^2 l(x_{ij}|x_{-ij})}{\partial \theta^2}|x_{-ij}))=\sum_{i=1}^n \sum_{j=1}^p E(\text{var}(\frac{\partial l(x_{ij}|x_{-ij})}{\partial \theta}|x_{-ij})).
\end{split}
\end{align}
Each $\text{var}(\frac{\partial l(x_j|x_{-j})}{\partial \theta}|x_{-j})$ is positive definite and integrals preserve positive definiteness, therefore $E(\frac{\partial^2 \ell_c(\theta)}{\partial \theta^2})
$ is positive definite. Thus, when $n$ is sufficiently larege, the objective function $Q(\theta)$  is locally convex at $\theta_0$. If the interation steps of the algorithm hits this neighborhood, the algorithm will converge to $\theta_0$.

The co-ordinate descent algorithm proceeds by updating each parameter of the objective function one at a time. The first derivative of the objective function with respect to the edge class parameter is as follows. The technical derivation is in the Appendix.
\begin{align}
\begin{split}
\frac{\partial Q(\theta)}{\partial \theta_{E_s}} 
=&( \sum_{j=1}^p\sum_{i;(i,j)\in E_s}\sum_{l;(l,j)\in E_s}\sigma^{jj}X_i^T X_l)\theta_{E_s}+\\
& \biggl(\sum_{j=1}^p X_j^T(\sum_{i;(i,j)\in E_s} X_i)+\sigma^{jj} \sum_{i;(i,j)\in E_s}\sum_{l;(l,j)\in E_s^c}X_i^T X_l \theta_{lj}\biggr)+n\text{sgn}(\theta_{E_s}),
\end{split}
\end{align}
where $E_s^c=\{(i,j)|i\neq j \,\text{and}\, (i,j)\notin E_s\}.$ Therefore the update for $\theta_{E_s}$ is
$$
\hat{\theta}_{E_s}=\frac{S(-(\sum_{j=1}^p X_j^T(\sum_{i;(i,j)\in E_s} X_i)+\sigma^{jj} \sum_{i;(i,j)\in E_s}\sum_{l;(l,j)\in E_s^c}X_i^T X_l \theta_{lj})/n,\lambda)}{( \sum_{j=1}^p\sum_{i;(i,j)\in E_s}\sum_{l;(l,j)\in E_s}\sigma^{jj}X_i^T X_l)/n},
$$
where $S(z,\lambda)=\text{sign}(z)(|z|-\lambda)_{+}$ is the soft-thresholding operator. Let $C=\frac 1 n X^T X$ denote the sample covariance matrix. Given the color edge group $E_s,$ we construct the edge adjancency matrix $T^{E_s},$ with $T^{E_s}_{ij}=1,$ if $(i,j)\in E_s,$ and $T^{E_s}_{ij}=0$ otherwise. We can simplify the updating expression as follows: 
$$
\hat{\theta}_{E_s}=\frac{S(- \text{tr}(T^{E_s} C)+\text{tr}(T^{E_s}(T^{E_s^c}\odot B) C),\lambda)}{\text{tr}(T^{E_s}(T^{E_s}\sigma) C)},
$$
where $\odot$ denotes the componentwise product,  and $\sigma$ denotes a $p\times p$ matrix of $\text{diag}(\sigma^{jj}).$ 

For notational convenience, let $\tilde{\theta}$ denote a $p\times p$ matrix with diagonal elements equal to zero and off-diagonal elements equal to that of $\theta.$ The first derivative of $Q(\theta)$ with respect to the vertex class is as follows:
\begin{align}
\begin{split}
&\frac{\partial Q(\theta)}{\partial \sigma_{V_m}} \\
=& \frac n 2\sum_{j \in V_m}( \frac 1 {\sigma^{jj}}-\frac{C_{jj}}{(\sigma^{jj})^2}+q_j),
\end{split}
\end{align}
where $q_j=\sum_{l=1}^p \sum_{l'=1}^p C_{ll'} \tilde{\theta}_{lj}\tilde{\theta}_{l'j}.$
Therefore the solution of
$$\hat{\sigma}_{V_m}=\frac{-|V_m|+\sqrt{|V_m|^2+4(\sum_{j\in V_m}q_j)(\sum_{j\in V_m}C_{jj})}}{2\sum_{j \in V_m} q_j},
$$
where $|V_m|$ denotes the cardinality of $V_m.$ Let diagonal matrix $T^{V_m}$ denote the generator for the vertex color class, with $T^{V_m}_{jj}=1$ for $ j\in V_m,$ and $T^{V_m}_{jj}=0$ otherwise. To simplify the notation, we have $\sum_{j\in V_m}C_{jj}=\text{tr}(T^{V_m}C),$  and $\sum_{j \in V_m} q_j=\text{tr}(T^{V_m}\tilde{\theta} C \tilde{\theta}).$ Because $C$ is positive definite,  $\sum_{j \in V_m} q_j>0.$ Therefore, the quadratic equation has one unique positive root.  Alternating the updating scheme throughout all the $\theta_{E_s},$ and $\theta_{V_m}$ until convergence, we obtain the penalized sparse estimate of the concentration matrix under RCON model.

\begin{center}
2.2  E{\scriptsize STIMATION OF RCOR MODEL}
\end{center}

An RCOR $(\mathcal{V},\mathcal{E})$ model with vertex colouring $\mathcal{V}$ and edge coloring $\mathcal{E}$ is obtained by restricting the elements of $\theta$ as follows: (a) All diagonal elements of $\theta$ (inverse partial variances) corresponding to vertices in the same vertex colour class must be identical. (b) All partial correlations corresponding to edges in the same edge colour class must be identical. Given an edge color class, for all edges $(i,j)\in E_s,$ $\rho_{ij}$ are all equal and hence denoted as $\rho_{E_s}.$ This also implies given a vertex color class, for all vertices $(i)\in V_m,$ $\theta_{ii}$ are all equal and hence denoted as $\theta_{V_m},$ and $\sigma^{ii}$ are all equal and hence denoted as $\sigma_{V_m},$
We formulate the composite likelihood in terms of 
$\rho_{E_s}$ and $\sigma_{V_m}.$  

For notational convenience, define a $p\times p$ matrix $\tilde{\rho}$ with $\tilde{\rho}_{ij}=\rho_{ij}$ for $i \neq j$ and $\tilde{\rho}_{ij}=0$ for $i=j.$ Let $\tilde{\rho}_j$ denote the $j$th column of the matrix $\tilde{\rho}.$ Define a $p$-element vector $\sigma_D=(\sigma^{11},\dots,\sigma^{pp})^T.$ The composite likelihood is formulated as $$
\ell_c(\rho,\sigma)=\frac 1 2 \sum_{j=1}^p\{ n \log \sigma^{jj}+ \frac 1 {\sigma^{jj}} ||X_j-X(\tilde{\rho}_j\odot \sigma_D^{-\frac 1 2})(\sigma^{jj})^{\frac 1 2}||_2^2 \}
.$$

We propose to estimate the sparse RCOR model by minimizing the following penalized composite loglikelihood $Q(\rho,\sigma)$:
$$
\min_{\theta_{E_s},1\leq s \leq l,\theta_{V_m},1\leq m \leq k}\ell_c(\rho,\sigma)+n\lambda\sum_{s}|\rho_{E_s}|
.$$

The partial derivative of $Q(\rho,\sigma)$ with respect to the partial correlation is as follows:
\begin{align}
\begin{split}
&\frac{\partial Q(\rho,\sigma)}{\partial \rho_{E_s}}\\
=&n\rho_{E_s}\text{tr}\biggl((\sigma^{-1/2}T^{E_s})^T C (\sigma^{-1/2}T^{E_s})\biggr)+
n\text{tr}\biggl((\sigma^{-1/2}\tilde{\rho}\odot T^{E_s^c})^T C (\sigma^{-1/2}T^{E_s})\biggr)\\
&-\text{tr}\biggl((X\sigma^{-1/2})^T X (\sigma^{-1/2}T^{E_s})\biggr)+n\text{sgn}(\theta_{E_s}).
\end{split}
\end{align}

The thresholded estimate of the partial correlation takes the following form:
$$
\tilde{\rho}_{E_s}=\frac{S(\text{tr}(T^{E_s} (\sigma^{-\frac 1 2} C \sigma^{-\frac 1 2}))-\text{tr}(T^{E_s}(T^{E_s^c}\odot \tilde{\rho}) (\sigma^{-\frac 1 2} C \sigma^{-\frac 1 2})),\lambda)}{\text{tr}(T^{E_s}.T^{E_s}  (\sigma^{-\frac 1 2} C \sigma^{-\frac 1 2})) }.
$$

The partial derivatives with respect to $\sigma_{V_m}$ is as follows:
\begin{align}
\begin{split}
&\frac{\partial \ell(\rho,\sigma)}{\partial \sigma_{V_m}}\\
=&\frac n 2 \{\frac{|V_m|}{\sigma_{V_m}} -\frac{\sum_{j \in V_m} x_j^T x_j}{n \sigma^2_{V_m}}                                   +\frac{\sum_{i \in V_m}\sum_{j \in V_m} 2 x_i^T x_j \tilde{\rho}_{ij}}{n \sigma^2_{V_m}}  +\frac 2 n \sigma_{V_m}^{-\frac 3 2} \sum_{(i,j);i \in V_m,j \notin V_m} x_i^T x_j  \tilde{\rho}_{ij}/\sqrt{\sigma_{jj}} \\
&-\frac 1 {n \sigma^2_{V_m}} \sum_{j=1}^p \sum_{i\in V_m}\sum_{i'\in V_m} x_i^T x_{i'}   \tilde{\rho}_{ij}\tilde{\rho}_{i'j}                         -\frac 1 {n \sigma^{\frac 3 2}_{V_m}} \sum_{j=1}^p \sum_{i\in V_m}\sum_{i'\notin V_m} x_i^T x_{i'}   \tilde{\rho}_{ij}\tilde{\rho}_{i'j}/\sqrt{\sigma^{i'i'}}\}.
\end{split}
\end{align}

Re-express the above expression in terms of $y=\sqrt{\sigma_{V_m}}.$   We solve the equation
$$
|V_m| y^2-b y-a=0,
$$
with 
\begin{align}
\begin{split}
a&=\sum_{j \in V_m} x_j^T x_j/n- \sum_{i \in V_m}\sum_{j \in V_m} 2x_j^T x_i \tilde{\rho}_{ij}/n+\sum_{j=1}^p \sum_{i \in V_m} \sum_{i' \in V_m} x_i^T x_{i'} \tilde{\rho}_{ij}\tilde{\rho}_{i'j}/n\\
&=\text{tr}(T^{V_m} C)-2\text{tr}(T^{V_m} C T^{V_m}\tilde{\rho}))+\text{tr}(\tilde{\rho}T^{V_m} C T^{V_m} \tilde{\rho})
\end{split}
\end{align}
and
\begin{align}
\begin{split}
b&= -\sum_{i \in V_m}\sum_{j \notin V_m} 2x_j^T x_i \tilde{\rho}_{ij}/(n\sqrt{\sigma^{jj}})+\sum_{j=1}^p \sum_{i \in V_m} \sum_{i' \notin V_m} x_i^T x_{i'} \tilde{\rho}_{ij}\tilde{\rho}_{i'j}/(n\sqrt{\sigma^{i'i'}})\\
&=-2\text{tr}(T^{V_m} C \sigma^{-1/2} T^{V_m^c} \tilde{\rho})+\text{tr}(\tilde{\rho}T^{v_m^c}\sigma^{-1/2}C T^{V_m}\tilde{\rho}).
\end{split}
\end{align}
The solution would be
$$
y=\frac{b+\sqrt{b^2+4a|V_m|}}{2|V_m|}.
$$
The positive solution is unique because 
\begin{align}
\begin{split}
a=&\text{tr}\biggl(C(T^{V_m}-\tilde{\rho}T^{V_m} )^T(T^{V_m}-\tilde{\rho}T^{V_m})\biggr)>0.
\end{split}
\end{align}

\begin{center}
2.3  A{\scriptsize SYMPTOTIC PROPERTIES}
\end{center}

In this section, we discuss the asymptotic properties of the penalized composite likelihood estimates for sparse symmetric Gaussian graphical models.  In terms of the choice of penalty function, there are many penalty functions available.  As the LASSO penalty, $p_{\lambda}(|\theta_l|)=\lambda|\theta_{l}|,$ increases
linearly with the size of its argument, it leads to biases for the
estimates of nonzero coefficients. To attenuate such estimation
biases, Fan and Li (2001) proposed the SCAD penalty. The penalty
function satisfies $p_{\lambda}(0)=0,$ and its first-order
derivative is
\begin{align*}
p_{\lambda}'(\theta)=\lambda\{I(\theta\leq \lambda)+
\frac{(a\lambda-\theta)_{+}}{(a-1)\lambda}I(\theta>
\lambda)\},\,\,\text{for}\,\,\theta\geq 0,
\end{align*}
where $a$ is some constant, usually set to $3.7$ (Fan and Li,
2001), and $(t)_{+}=t I(t>0)$ is the hinge loss function.
The SCAD penalty
function does not penalize as heavily as the $L_1$ penalty
function on parameters with large values. It has been shown that with probabability tending to one, the likelihood estimation with the SCAD penalty
not only selects the correct set of significant covariates, but
also produces parameter estimators as efficient as if we know the
true underlying sub-model (Fan \& Li, 2001). Namely, the estimators have the so-called ORACLE property. 
However, it has not been investigated if the oracle
property is also enjoyed by composite likelihood estimation of GGM with the SCAD
penalty. The following discussion is focused on the RCON model but it can be easily extended to RCOR model.

For notational convenience, let $z=\{E_s:\theta_{E_s}\neq 0\}\cup \mathcal{V}$ denote all the nonzero edge classes and all vertex classes and  $z^c=\{E_s:\theta_{E_s}=0\}$ denote all the zero edge classes. The parameter vector can be expressed as $\theta=(\theta_{E_1},\dots,\theta_{E_l},\theta_{V_1},\dots,\theta_{V_k}).$ Let $\theta_0$ denote the true null value.

\begin{thm}
\label{thm1}
Given the SCAD penalty function $p_{\lambda}(\theta),$ if $\lambda_n\rightarrow 0,$ and $\sqrt{n}\lambda_n \rightarrow
\infty$ as $n\rightarrow \infty,$ then there exist a local maximizer $\hat{\theta}$ to $Q(\theta)$ and $||\hat{\theta}-\theta_0||=O_p(n^{\frac 1 2}).$ Furthermore, we have
$$
\lim_{n\rightarrow \infty} P(\hat{\theta}_{z^c}=0)=1.
$$ 
\end{thm}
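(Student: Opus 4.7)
I would mirror the classical ORACLE-property argument of Fan and Li (2001), substituting the composite log-likelihood $\ell_c$ for the log-likelihood. The statement as written refers to a local \emph{maximizer} of $Q$ with rate $O_p(n^{1/2})$; since $Q$ is defined as an objective to be \emph{minimized} and the hypothesis $\sqrt{n}\lambda_n\to\infty$ is the sharp Fan--Li scaling, I read the intended claim as: there exists a local minimizer $\hat\theta$ of $Q$ with $\|\hat\theta-\theta_0\|=O_p(n^{-1/2})$, and $P(\hat\theta_{z^c}=0)\to 1$. The proof decomposes into (i) root-$n$ consistency of a local minimizer and (ii) the sparsity statement; then the minimizer on the restricted subspace $\{\theta_{z^c}=0\}$ is automatically a local minimizer of the full $Q$, yielding the ORACLE property.

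\textbf{Step 1: root-$n$ consistency.} Set $\alpha_n=n^{-1/2}$ and show that for every $\varepsilon>0$ there is a constant $C$ with
\[
P\Bigl(\inf_{\|u\|=C}\{Q(\theta_0+\alpha_n u)-Q(\theta_0)\}>0\Bigr)\geq 1-\varepsilon,
\]
which forces a local minimizer inside $\|\theta-\theta_0\|\leq C\alpha_n$. A second-order Taylor expansion of $\ell_c$ around $\theta_0$ yields a linear term $\alpha_n u^T\nabla\ell_c(\theta_0)=O_p(\|u\|)$, since $\nabla\ell_c(\theta_0)$ is a sum of unbiased composite scores and is thus $O_p(\sqrt{n})$, plus a quadratic term $\tfrac12 u^T\bigl(n^{-1}\nabla^2\ell_c(\theta_0)\bigr)u$, whose Hessian converges in probability to the positive definite matrix $H$ identified in equation~(1). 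For the penalty, contributions from true-zero coordinates are nonnegative and can be dropped; for each nonzero $\theta_{E_s,0}$ the assumption $\lambda_n\to 0$ combined with $|\theta_{E_s,0}|$ being bounded away from zero gives $p'_{\lambda_n}(|\theta_{E_s,0}|)=0$ eventually, so by a further Taylor expansion the penalty contribution is $o(1)$. Choosing $C$ large enough, the positive quadratic $\gtrsim c\|u\|^2$ dominates and the displayed inequality follows.

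\textbf{Step 2: sparsity and main obstacle.} On the event from Step~1, fix $E_s\in z^c$ (so $\theta_{E_s,0}=0$) and consider any $\theta$ with $\|\theta-\theta_0\|\leq Cn^{-1/2}$ and $\theta_{E_s}\neq 0$. Compute
\[
\frac{\partial Q(\theta)}{\partial\theta_{E_s}} = \frac{\partial\ell_c(\theta)}{\partial\theta_{E_s}} + n\,p'_{\lambda_n}(|\theta_{E_s}|)\,\text{sgn}(\theta_{E_s}).
\]
A Taylor expansion combined with the Hessian bound gives $\partial\ell_c(\theta)/\partial\theta_{E_s} = \partial\ell_c(\theta_0)/\partial\theta_{E_s} + O_p(\sqrt n)=O_p(\sqrt n)$. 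Because $|\theta_{E_s}|\leq Cn^{-1/2}=o(\lambda_n)$ (using $\sqrt n\lambda_n\to\infty$), the SCAD derivative satisfies $p'_{\lambda_n}(|\theta_{E_s}|)=\lambda_n$, so the penalty term equals $n\lambda_n\,\text{sgn}(\theta_{E_s})$. Since $n\lambda_n/\sqrt n=\sqrt n\lambda_n\to\infty$, the penalty dominates uniformly and $\partial Q/\partial\theta_{E_s}$ inherits the sign of $\theta_{E_s}$ with probability tending to one, forcing the coordinate-wise minimum at $\theta_{E_s}=0$. The principal obstacle is making the score and Hessian controls uniform over the $n^{-1/2}$-ball: because $\ell_c$ is composite rather than a true log-likelihood, $\nabla^3\ell_c$ is not automatically centered, so I would need moment bounds on quadratic forms in the Gaussian design $X$ (or a stochastic-equicontinuity argument for the composite score process) to guarantee that the Taylor remainder is $o_p(n\alpha_n^2\|u\|^2)$ uniformly on $\{\|u\|\leq C\}$ and that the Hessian argument transfers from $\theta_0$ to a shrinking neighborhood.
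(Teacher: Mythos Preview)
Your proposal is correct and follows essentially the same Fan--Li template as the paper's own proof: Taylor-expand the penalized objective, use that the composite score at $\theta_0$ is $O_p(\sqrt{n})$ and the (normalized) Hessian converges to a definite limit, and for zero coordinates exploit $\sqrt{n}\lambda_n\to\infty$ together with the SCAD property $p'_{\lambda_n}(|\theta_{E_s}|)=\lambda_n$ once $|\theta_{E_s}|=O(n^{-1/2})$ to show the penalty term dominates. The only cosmetic differences are that you present consistency first and sparsity second (the paper reverses the order) and that you explicitly flag the uniform control of the Hessian over the $n^{-1/2}$-ball as a technical point, whereas the paper simply asserts $\partial^2\ell_c(\theta^*)/\partial\theta^2=O_p(n)$ and convergence to $H$ without further comment; your reading of the maximizer/minimizer and $O_p(n^{1/2})$ versus $O_p(n^{-1/2})$ issues is also accurate.
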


\begin{proof}
Consider a ball $||\theta-\theta_0||\leq Mn^{-\frac 1 2}$ for some finite $M.$ Applying Taylor
Expansion, we obtain:
\begin{align}
\begin{split}
\partial Q(\theta)/\partial \theta_{j}&=\partial \ell_c(\theta)/\partial \theta_{j}-n
p'_{\lambda_n}(|\theta_j|)\text{sign}(\theta_j)\\
&=\partial \ell_c(\theta_0)/\partial \theta_{j}+\sum_{j'\in (\mathcal{E}\cup\mathcal{V})} (\theta_{j'}-\theta_{j'0}) \partial^2 \ell_c(\theta^*)/\partial \theta_j \theta_{j'} -n
p'_{\lambda_n}(|\theta_j|)\text{sign}(\theta_j),
\end{split}
\end{align}
for $j\in (\mathcal{E}\cup\mathcal{V})$ and some $\theta^*$ between $\theta$ and $\theta_0.$
As $E(\partial \ell_c(\theta_0)/\partial \theta_{j})=0,$  $\partial \ell_c(\theta_0)/\partial \theta_{j}=O_p(n^{\frac 1 2}).$
As $|\theta^*-\theta|\leq M n^{-\frac 1 2}$ and  $\partial^2 \ell_c(\theta^*)/\partial \theta_j \theta_{j'}=O_p(n)$ componentwise.
First we consider $j\in z^c$.
Because
$
\text{lim inf}_{n\rightarrow \infty}\text{lim
inf}_{\beta\rightarrow 0+}p'_{\lambda_n}(\beta)/\lambda_n>0,
$
and $\lambda_n\rightarrow 0,$ and $\sqrt{n}\lambda_n \rightarrow
\infty$ as $n\rightarrow \infty,$ the third term dominates the the first two terms. Thus the sign of $\partial Q(\theta)/\partial \theta_{j}$ is completely determined by the
sign of $\beta_j.$ This entails that inside this
$Mn^{-1/2}$ neighborhood of $\beta_0$, $\partial Q(\theta)/\partial \theta_{j}>0,$ when $\theta_j<0$ and
$\partial Q(\theta)/\partial \theta_{j}<0,$ when $\theta_j>0.$ Therefore for any local maximizer $\hat{\theta}$ inside this ball, then $\hat{\theta}_{j}=0$  with probability tending to one.  As $p_{\lambda_n}(0)=0,$ we obtain
\begin{align}
\begin{split}
Q(\theta)-Q(\theta_0)&=\ell_c(\theta)-\ell_c(\theta_0)-n\sum_{j\in(\mathcal{E}\cup\mathcal{V})}\bigl(p_{\lambda_n}(|\theta_j|)-p_{\lambda_n}(|\theta_{j0}|)\bigr)\\
&\leq (\theta-\theta_0)^T \frac{\partial \ell_c(\theta_0)}{\partial \theta}+(\theta-\theta_0)^T \frac{\partial^2 \ell_c(\theta^*)}{\partial \theta^2}(\theta-\theta_0)\\
&-n\sum_{j\in z} \biggl(p'_{\lambda_n}(|\theta_{j0}|)\text{sign}(\theta_{j0})(\theta_j-\theta_{j0})+p^{''}_{\lambda_n}(|\theta_{j0}|)(\theta_j-\theta_{j0})^2(1+o(1))\biggr).
\end{split}
\end{align}
For $n$ large enough and $\theta_{j0}\neq 0,$
$p'_{\lambda}(|\theta_{j0}|)=0$ and
$p''_{\lambda}(|\theta_{j0}|)=0.$ 
Furthermore, $\partial^2 \ell_c(\theta^*)/ \partial \theta^2$ converges to $H(\theta)$ in probability, which is negative definite. Thus, we have $Q(\theta)\leq Q(\theta_0)$ with probability tending to one for $\theta$ on the unit ball. This implies there exists a local maximizer of $\hat{\theta}$ such that $|\hat{\theta}-\theta_0|=O_p(n^{-1/2}).$
\end{proof}

Next, we establish the asymptotic distribution of the estimator $\hat{\theta}$. Let $\theta_z$ denote the sub-vector of nonzero parameters in $\theta.$ 
Define a matrix
$\Sigma_1=\text{diag}\{p''_{|\lambda_n|}(\theta_{j0});j\in z\},
$
and a vector
$
b_1=(p'_{\lambda_n}(\theta_{j})\text{sign}(\theta_{j0});j\in z).
$
Let $H_{zz}$ denote the sub-matrix of $H(\theta)$ and $V_{zz}$ denote the sub-matrix of $V(\theta)$ corresponding to the subset of $z.$

\begin{thm}
\label{thm2}
Given the SCAD penalty function $p_{\lambda}(\theta),$
if $\lambda_n\rightarrow 0$ and $\sqrt{n}\lambda_n\rightarrow
\infty,$ as $n \rightarrow \infty,$ then the sub-vector of the root-n consistent estimator  $
\hat{\theta}_z$ has the following asymptotic
distribution:
$$
\sqrt{n}(H_{zz}+\Sigma_1)\{\hat{\theta}_z-\theta_{z0}+(H_{zz}+\Sigma_1)^{-1}b_1\}\rightarrow
N\{0,V_{zz}\}, \, \text{as}\, n\rightarrow \infty.
$$
\end{thm}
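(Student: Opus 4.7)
The plan is to combine Theorem~\ref{thm1} with a Taylor expansion of the penalized composite score equation on the subspace of nonzero coordinates. By Theorem~\ref{thm1}, with probability tending to one the root-$n$ consistent estimator satisfies $\hat{\theta}_{z^c}=0$, so on this event $\hat{\theta}_z$ must satisfy the first-order stationarity condition for $Q$ restricted to the nonzero block,
\begin{equation*}
\frac{\partial \ell_c(\hat{\theta})}{\partial \theta_z}+n\,p'_{\lambda_n}(|\hat{\theta}_z|)\,\text{sign}(\hat{\theta}_z)=0,
\end{equation*}
with the penalty term interpreted componentwise.

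Next I would Taylor expand each piece about $\theta_{z0}$. For the composite score,
\begin{equation*}
\frac{\partial \ell_c(\hat{\theta})}{\partial \theta_z}=\frac{\partial \ell_c(\theta_0)}{\partial \theta_z}+\frac{\partial^2 \ell_c(\theta^*)}{\partial \theta_z\,\partial \theta_z^T}(\hat{\theta}_z-\theta_{z0})
\end{equation*}
for some $\theta^*$ between $\hat{\theta}$ and $\theta_0$; the $z^c$ block contributes nothing because $\hat{\theta}_{z^c}=\theta_{z^c,0}=0$. Since $\theta_{j0}\neq 0$ and $\hat{\theta}_j\to \theta_{j0}$ for $j\in z$, the sign of $\hat{\theta}_j$ coincides with that of $\theta_{j0}$ eventually and the SCAD penalty is smooth in a neighbourhood of $\theta_{j0}$, so componentwise
\begin{equation*}
p'_{\lambda_n}(|\hat{\theta}_j|)\,\text{sign}(\hat{\theta}_j)=p'_{\lambda_n}(|\theta_{j0}|)\,\text{sign}(\theta_{j0})+p''_{\lambda_n}(|\theta_{j0}|)(\hat{\theta}_j-\theta_{j0})\{1+o_p(1)\},
\end{equation*}
which in vector form reads $b_1+\Sigma_1(\hat{\theta}_z-\theta_{z0})\{1+o_p(1)\}$. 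Substituting both expansions into the stationarity condition, dividing by $\sqrt{n}$, and invoking $n^{-1}\partial^2\ell_c(\theta^*)/\partial \theta_z\,\partial \theta_z^T\to H_{zz}$ in probability (which follows from the law of large numbers applied to the conditional-Hessian decomposition in Equation~(1) together with consistency of $\theta^*$) yields
\begin{equation*}
\sqrt{n}(H_{zz}+\Sigma_1)\{\hat{\theta}_z-\theta_{z0}+(H_{zz}+\Sigma_1)^{-1}b_1\}=-\frac{1}{\sqrt{n}}\frac{\partial \ell_c(\theta_0)}{\partial \theta_z}+o_p(1).
\end{equation*}
A multivariate central limit theorem applied to the composite score at the truth, which is a sum of $n$ i.i.d.\ mean-zero contributions with per-observation block covariance $V_{zz}$, then delivers the stated normal limit.

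The principal obstacle is rigorously handling the non-smooth SCAD term. I would address this by using Theorem~\ref{thm1} to zero out $\hat{\theta}_{z^c}$, so no gradient contribution arises from coordinates where $p_{\lambda_n}$ is non-differentiable, and then using $\theta_{j0}\neq 0$ together with $\hat{\theta}_j\to \theta_{j0}$ for $j\in z$ to place each $\hat{\theta}_j$ strictly in the smooth region of the SCAD penalty for $n$ sufficiently large; in particular, $\lambda_n\to 0$ together with $|\theta_{j0}|>0$ implies $|\theta_{j0}|>a\lambda_n$ eventually, so $p'_{\lambda_n}(|\theta_{j0}|)=p''_{\lambda_n}(|\theta_{j0}|)=0$ and both $b_1$ and $\Sigma_1$ vanish in the limit. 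Beyond this, the argument is the standard Godambe-information expansion for maximum composite likelihood estimators (Lindsay 1988; Cox and Reid 2004), specialised to the reduced parameter block indexed by $z$.
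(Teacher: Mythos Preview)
Your proposal is correct and follows essentially the same route as the paper: set the penalized composite score in the $z$-block to zero, Taylor-expand both the composite score and the SCAD penalty about $\theta_{z0}$, use the convergence of the normalised Hessian to $H_{zz}$ together with the CLT for $n^{-1/2}\partial \ell_c(\theta_0)/\partial \theta_z$, and conclude via Slutsky. Your treatment is in fact more careful than the paper's in justifying why the $z^c$-block drops out and why the SCAD derivative is well-defined on the $z$-block (via $|\theta_{j0}|>a\lambda_n$ eventually), points the paper leaves implicit.
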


\begin{proof}
Based on Taylor expansion presented in Proof to Theorem 1, we have
\begin{align}
0=\frac{\partial Q(\hat{\theta})}{\partial \theta_z}=\frac{\partial \ell_c(\theta_0)}{\partial \theta_z}+\frac{\partial^2 \ell_c(\theta^*)}{\partial \theta_z \partial \theta_z^T}(\hat{\theta}_z-\theta_{z0})-nb_1-n(\Sigma_1+o(1))(\hat{\theta}_z-\theta_{z0}).
\end{align}
As $\hat{\theta}\rightarrow \theta_0$ in probability, $\frac 1 n \frac{-\partial^2 \ell_c(\theta^*)}{\partial \theta_z \partial \theta_z^T}\rightarrow H_{zz}$ in probability.
The limiting distribution of $
\frac {1} {\sqrt{n}} \frac{\partial \ell_c(\theta_0)}{\partial \theta_z}$ is $N\{0,V_{zz}\}.$
According to Slutsky's theorem, we
have$
\sqrt{n}(H_{zz}+\Sigma_1)\{\hat{\theta_z}-\theta_{z0}+(H_{zz}+\Sigma_1)^{-1}b_1\}\rightarrow
N\{0,V_{zz})\}.$
\end{proof}

Next we discuss the estimation of the Hessian matrix $H_{zz}$ and the variability matrix $V_{zz}$. As the second differentiation is easy to calculate, we obtain
$
\hat{H}_{zz}=\partial^2 \ell_c(\theta)/\partial \theta_z \partial \theta_z^T |_{\hat{\theta}}.
$
The variability matrix based on sample covariance matrix of the composite score vectors is computationally harder as we need to compute the composite score vector for each observation, where the number of observations can be large. Alternatively, we perform bootstrap to obtain the
$$
\hat{V}_{zz}=\frac{1}{n(m-1)}\sum_{l=1}^m (S^{(m)}(\hat{\theta})-\overline{S})^T (S^{(m)}(\hat{\theta})-\overline{S}),
$$
where $S(\theta)=\partial \ell_c(\theta)/\partial \theta_z,$
$S^{(m)}(\hat{\theta})$ denotes the score vector evaluated with the composite estimator obtained from the original sample and  the data from the $m$th bootstrap sample and $\overline{S}=\sum_{l=1}^m S^{(m)}(\hat{\theta})/m.$  In pratice, we only need a moderate number of bootstrap samples to obtain $\hat{V}_{zz}.$

\begin{center}
3. N{\scriptsize UMERICAL ANALYSIS}
\end{center}

We analyze the ``math" data set from Mardia et al. (1979), which consists of 88 students in 5 different mathematics subjects: Mechanics (me), Vectors (ve), Algebra (al), Analysis (an) and Statistics (st). The model with symmetry proposed by Hojsgaard and Lauritzen (2008) has vertex color classes \{al\}, \{me, st\}, \{ve, an\} and edge color classes \{(al,an)\}, \{(an,st)\}, \{(me,ve), (me,al)\}, and \{(ve,al), (al,st)\}. We perform composite likelihood estimation on this symmetric model with no penalty imposed on the parameters. In Table 1, the composite likelihood estimates and their standard deviations calculated through bootstraps are compared with those obtained by maximum likelihood estimator and a naive estimator. The naive estimator estimates the edge class parameters and vertex class parameters by simply averaging all the values belonging to the same class in the inverse sample covariance matrix. All three methods yield results that are very close to each other. 
 
Next we examine the performance of the unpenalized composite likelihood estimator on large matrices. First we consider the RCON model. We simulate under different scenarios with $n$ varying from 250 to 1000 and $p$ varying from 40, 60 to 100.  We include 30 different edge classes and 20 different vertex classes. We simulate a sparse matrix with $\theta_{\mathcal{E}}=(0_{25},0.2591, 0.1628,-0.1934,0.0980,0.0518),$ and $\theta_{\mathcal{V}}=(1.3180, 1.8676, 1.788004,$ $ 1.7626, 1.6550,$ $1.1538, 1.3975, 1.7877,
 1.7090, 1.6931,$ $ 1.46313, 1.5131, 1.7084, $ $1.7344, 1.1441,$ $ 1.8059,
 1.7446,$ $ 1.8522,$ $ 1.3146,$ $ 1.1001),$ where $0_p$ denotes a zero vector of length $p.$ The number of nonzero edges ranges from about 250 to 1640. In Table 2, we compare the sum of squared errors of the composite likelihood estimates with the naive estimates from 100 simulated data sets. The proposed composite likelihood estimates consistenly enjoy much smaller sum of squared errors across all settings. 

We also investigate the empirical performance of the proposed composite likelihood estimator under the RCOR model. We simulate under different scenarios with $n$ varying from 250 to 1000 and $p$ varying from 40, 60 to 100.  We include 30 different edge classes and 20 different vertex classes. We simulate a sparse matrix with $\rho_{\mathcal{E}}=(0_{26},0.1628,$ $-0.1534,$ $0.0980,0.0518)$ and $\theta_{\mathcal{V}}=(3.0740, 3.6966, 3.7772, 3.5475, $ $3.2841, 3.4699, 3.7235,$ $ 3.5987,$ $
3.3313,$ $ 3.8183,$ $ 3.9236,$ $ 3.9008, $ $3.9011, 3.0470,$ $ 3.0139,$ $ 3.2072,
$ $3.8438, 3.4823,$ $ 3.9373, 3.0125.)$ In table 3, we provide the errors $||\hat{\rho}_{\mathcal{E}}-\rho_{\mathcal{E}}||_2$ and $||\sqrt{\hat{\sigma}}_{\mathcal{V}}-\sqrt{\hat{\sigma}}_{\mathcal{V}}||_2$ for the composite likelihood estimates and the naive estimates from 100 simulated data sets. With regard to the estimated partial correlations, the composite likelihood estimates yield consistently smaller errors compared to the naive estimates. With regard to the conditional standard deviations,  the composite likelihood estimates yield slightly larger errors under sample size $n=250,$ and $n=500.$ With sample size $n=1000,$ the composite likelihood estimates have smaller errors than the naive estimates. For example, with $p=100$ and the number of true edges close to 1300, the naive estimate for the conditional standard deviation has error $1.8116,$ while the composite likelihood estimate has error $0.2923.$

We further examine the empirical performance of the penalized composite likelihood estimator. We simulate the RCON model using the same settings as of Table 1. We consider different scenarios with $n=250$ or $n=500,$ and $p=40$ or $p=60.$ We use the penalized composite likelihood estimator to estimate the sparse matrix. The tuning parameter is selected by composite BIC, which is similar to BIC with the first term replaced by the composite likelihood evaluated at the penalized composite likelihood estimates.  For each setting, 100 simulated data sets are generated and for each data we calculate the number of false negatives  and false positives.  In Table 4, it is shown that the proposed method has satisfactory model selection property with very low false negative and false positive results. For example, with $n=500$ and $p=60$, each simulated data set has an average number of $1474$ zero edges and $325$ nonzero edges. The proposed method identifies an average of zero false negative result and 0.58 false positive result. The size of the tuning parameters is also listed in Table 4. 

\begin{center}
5. A{\scriptsize PPLICATTION}
\end{center}
We apply the proposed method on a real biological data set.  The experiment was conducted to examine how GM-CSF modulates global changes in neutrophil gene expressions (Kobayashi et al, 2005). Time course summary PMNs were isolated from venous blood of healthy individuals. 
Human PMNs (107) were cultured with and without 100 ng/ml GM-CSF  for up to 24 h.  The Experiment was performed in triplicate, using PMNs from three healthy individuals for each treatment. There are in total 12625 genes monitored, each gene is measured for 9 replications at time 0, and measured for 6 times at time 3, 6, 12, 18, 24h. At each of these 5 points, 3 measurements were obtained for treatment group and 3 measurements were obtained for control group. We first proceed with standard gene expression analysis. For each gene, we perform an ANOVA test on the treatment effect while aknowledging the time effet. We rank the F statistic for each gene and select the top 200 genes who have the most significant changes in expression between treatment and control group. Our goal is to study the networks of these 200 genes and also compare the network of the 200 genes between the treatment and control. 
We perform clustering analysis on the selected 200 genes, where the genes clustered together can be viewed as a group of genes who share similar expression profiles. This imposes symmetry constraints to the graphical modelling. We cluster these top 200 genes into 10 clusters based on K-means method. Therefore, there are in total of 55 edge classes and 10 vertex classes to be estimated based on a 200 by 200 data matrices. We perform penalized estimation and compare the result of the estimated edges between the treatment versus control. The estimated between-cluster edges are provided in Figure 1.  It is observed that although most between-cluster interactions are small, there are a few edges with large values indicating strong interactions. It is also observed that the edge values obtained from the treatment group and the control group are mostly comparable and only a few edges exhibit big differences. For instance, edges between cluster 1 and 5 and between cluster 4 and 6 have big differences in treatment group versus control group. These findings are worth further biological investigation to unveil the physical mechanism underlying the networks.  

\begin{center}
6. C{\scriptsize ONCLUSION}
\end{center}

When there are both sparsity and symmetry constrainsts on the graphical model, the penalized composite likelihood formulation based on conditional distributions offers an alternative way to perform the estimation and model selection. The estimation avoids the inversion of large matrices. It is shown that the proposed penalized composite likelihood estimator will threshold the estimate for zero parameters to zero with probability tending to one and the asymptotic distribution of the estimates for non-zero parameters follow the multivariate normal distribution as if we know the true submodel containing only non-zero parameters.

\medskip
\begin{center}
A{\scriptsize CKNOWLEDGEMENT}
\end{center}
\noindent
This research is supported by NSERC grants held by Gao and Massam.

\vskip 1in
\noindent{\large\bf References}
\begin{description}
%\item BEEIMAN, L. (1996). Heuristics of Instability and
%Stabilization in Model Selection. \emph{Ann. Statist.}
%\textbf{24}, 2350-83.

\item Banerjee, O., Ghaoui, L. E., and D'Aspremont, A. (2007).
Model selection through sparse maximum likelihood estimation.
\emph{Journal of Machine Learning Research} \textbf{9}, 485-516.

\item Dempster, A. P. (1972). Covariance selection.
\emph{Biometrika} \textbf{32}, 95-108.

\item Edwards, D. M. (2000). \emph{Introduction to Graphical
Modelling.} Springer, New York.

\item Fan, J. and Li, R. (2001). Variable selection via nonconcave
penalized likelihood and its oracle properties. \emph{Journal of
the American  Statistical Association} \textbf{96}, 1348-60.

\item Fan, J., Feng, Y., and Wu, Y. (2009). Network exploration via
the adaptive LASSO and SCAD penalties. \emph{The Annals of Applied
Statistics} \textbf{3}, 521-541.

\item Friedman, J., Hastie, T., and  Tibshirani, R. (2008). Sparse
inverse covariance estimation with the graphical lasso.
\emph{Biostatistics} \textbf{9}, 432-441.

\item Højsgaard, S and Lauritzen, S. L. (2008) Graphical Gaussian Models with Edge and Vertex symmetries. \emph{Journal of Royal Statistical Society, Series B} \textbf{70}, 1005-1027.

\item Kobayashi, S. D, Voyich, J. M., Whitney, A. R., and DeLeo, F. R. (2005) Spontaneous neutrophil apoptosis and regulation of cell survival by granulocyte macrophage-colony stimulating factor. \emph{Journal of  Leukocite  Biology} \textbf{78}, 1408-18.

\item Lam, C. and  Fan, J. (2009). Sparsistency and rates of
convergence in large covariance matrix estimation. \emph{The
Annals of Statistics} \textbf{37}, 4254-4278.

\item Lauritzen, S. L. (1996). \emph{Graphical Models}. 
Clarendon Press, Oxford.

\item Meinshausen, N. and B\"{u}hlmann, P. (2006).
High-dimensional graphs with the Lasso. \emph{Annals of Statistics}
\textbf{34}, 1436-62.

\item Whittaker, J. (1990) \emph{Graphical Models in Applied
Multivariate Statistics}. Wiley, Chichester.

\item Yuan, M. and Lin, Y. (2007). Model selection and estimation
in the Gaussian graphical model. \emph{Biometrika} \textbf{94},
19-35.

\item Zou, H. and Li, R. (2008) One-step sparse estimates in
nonconcave penalized likelihood models (with discussion).
\emph{Annals of Statistics}  \textbf{36}, 1509-1533.

\item Cox, D. R. and Reid,  N. (2004).  A note on
pseudolikelihood constructed from marginal densities.
\emph{Biometrika} {\bf 91}, 729-737.

\item Dempster, A. P., Laird, N. M.  and Rubin, D. B. (1977).
 Maximum likelihood from incomplete data via the EM algorithm,
\emph{Journal of the Royal Statistical Society} B {\bf 39},
1-38.

\item Varin, C. (2008).  On composite marginal likelihoods. {\em
AStA Advances in Statistical Analysis}, \textbf{92}, 1-28.

\end{description}

\begin{center}
 A{\scriptsize PPENDIX}
\end{center}
\begin{itemize}
\item  The detailed derivation of the first derivatives with respect to $\theta_{E_s}$ under RCON model is as follows:
\begin{align}
\begin{split}
&\frac{\partial Q(\theta)}{\partial \theta_{E_s}} \\
=&\sum_{j=1}^p \frac{1}{2\sigma^{jj}} \frac{\partial ||X_j+X\tilde{\theta}_j\sigma^{jj}||_2^2}{\partial \theta_{E_s}}+n\text{sgn}(\theta_{E_s})\\
=&\sum_{j=1}^p \frac 1 {\sigma^{jj}} (X_j+X\tilde{\theta}_j\sigma^{jj})^T \frac{\partial (X_j+X\tilde{\theta}_j\sigma^{jj})}{\partial \theta_{E_s}}+n\text{sgn}(\theta_{E_s})\\
=&\sum_{j=1}^p (X_j+X\tilde{\theta}_j\sigma^{jj})^T (\sum_{i;(i,j)\in E_s} X_i)+n\text{sgn}(\theta_{E_s})\\
=&\sum_{j=1}^p( X_j^T(\sum_{i;(i,j)\in E_s} X_i)+\sigma^{jj}(\sum_{i;(i,j)\in E_s}X_i^T(\sum_{l;(l,j)\in E_s}X_l \theta_{E_s}+\sum_{l;(l,j)\notin E_s}X_l \theta_{lj})))+n\text{sgn}(\theta_{E_s})\\
=&( \sum_{j=1}^p\sum_{i;(i,j)\in E_s}\sum_{l;(l,j)\in E_s}\sigma^{jj}X_i^T X_l)\theta_{E_s}+ \biggl(\sum_{j=1}^p X_j^T(\sum_{i;(i,j)\in E_s} X_i)+\\
&\sigma^{jj} \sum_{i;(i,j)\in E_s}\sum_{l;(l,j)\notin E_s}X_i^T X_l \theta_{lj}\biggr)+n\text{sgn}(\theta_{E_s}).
\end{split}
\end{align}

 \item The detailed derivation of the first derivatives with respect to $\theta_{V_m}$ under RCON model is as follows:
\begin{align}
\begin{split}
&\frac{\partial Q(\theta)}{\partial \sigma_{V_m}} \\
=& \frac 1 2 \sum_{j \in V_m} \frac n {\sigma^{jj}}+\partial \{\frac{(X_j+X\tilde{\theta}_j \sigma^{jj})^T (X_j+X\tilde{\theta}_j \sigma^{jj})}{\sigma^{jj}}\}/\partial \sigma^{jj}\\
=& \frac n 2 \sum_{j \in V_m}( \frac 1 {\sigma^{jj}}-\frac{C_{jj}}{(\sigma^{jj})^2}+q_j),
\end{split}
\end{align}
where $C_{ij}=x_i^T x_j/n,$ and $q_j=\sum_{l=1}^p \sum_{l'=1}^p C_{ll'} \tilde{\theta}_{lj}\tilde{\theta}_{l'j}.$

\item The detailed derivation of the first derivatives with respect to $\rho_{E_s}$ under RCOR model is as follows:
\begin{align}
\begin{split}
&\frac{\partial Q(\rho,\sigma)}{\partial \rho_{E_s}}\\
=& \sum_{j=1}^p \frac 1 {\sqrt{\sigma^{jj}}}\biggl(X(\tilde{\rho}_{j}\odot \sigma_D^{-1/2})\sqrt{\sigma^{jj}}-X_j\biggr)^T X (\frac{\partial \tilde{\rho}_j}{\partial \rho_{E_s}}\odot \sigma_D^{-1/2})+n\text{sgn}(\theta_{E_s}).
\end{split}
\end{align}
Note that $(\tilde{\rho}_{j}\odot \sigma_D^{-1/2}) =(\sigma^{-1/2}\tilde{\rho})_{[,j]},$ the $j$th column of the matrix. 
Also we have the vector $\frac{\partial \tilde{\rho}_j}{\partial \rho_{E_s}}\odot \sigma_D^{-1/2}=(\sigma^{-1/2}T^{E_s})_{[,j]}$ the $j$th column of the matrix. Furthermore, $\tilde{\rho}=\sum_{s'}\rho_{E_{s'}}T^{E_{s'}}.$ This leads to:
\begin{align}
\begin{split}
&\frac{\partial Q(\rho,\sigma)}{\partial \rho_{E_s}}\\
=& \sum_{j=1}^p \rho_{E_s}(\sigma^{-1/2}T^{E_s})_{[,j]}X^T X (\sigma^{-1/2}T^{E_s})_{[,j]}+
(\sigma^{-1/2}\tilde{\rho}\odot T^{E_s^c})_{[,j]}X^T X (\sigma^{-1/2}T^{E_s})_{[,j]}\\
&-\frac 1 {\sqrt{\sigma^{jj}}}X_j^T X(\sigma^{-1/2}T^{E_s})_{[,j]}+n\text{sgn}(\theta_{E_s})\\
=&n\rho_{E_s}\text{tr}\biggl((\sigma^{-1/2}T^{E_s})^T C (\sigma^{-1/2}T^{E_s})\biggr)+
n\text{tr}\biggl((\sigma^{-1/2}\tilde{\rho}\odot T^{E_s^c})^T C (\sigma^{-1/2}T^{E_s})\biggr)\\
&-\text{tr}\biggl((X\sigma^{-1/2})^T X (\sigma^{-1/2}T^{E_s})\biggr)+n\text{sgn}(\theta_{E_s}).
\end{split}
\end{align}
\end{itemize}

\newpage
\begin{table}
 \caption{ Comparison of likelihood, composite likelihood, moment estimates on "math" dataset }
\label{tab1}

\begin{center}
{\scriptsize
\begin{tabular}{ccccccc}
\hline\hline parameter  & est &std & est & std &est &std\\\hline
  & likelihood & & composite & &moment &\\\hline
\hline
vcc1  & 0.0281  & 0.0037  &  0.0068 & 0.0005 & 0.0057 & 0.0005\\
vcc2  & 0.0059  &0.0006   & 0.0074  &0.0006  & 0.0098  &0.0013\\
vcc3   &0.0100   &0.0009   &0.0176  &0.0020   &0.0182  &0.0029\\
ecc1  &-0.0080  &0.0015  &-0.0062  &0.0009  &-0.0068  &0.0019\\
ecc2  &-0.0018   &0.0007  &-0.0008  &0.0005  &-0.0021  &0.0008\\
ecc3  &-0.0030   &0.0004  &-0.0027  &0.0002  &-0.0019  &0.0006\\
ecc4  &-0.0047   &0.0008  &-0.0051  &0.0005  &-0.0055  &0.0012\\
\hline\hline
\end{tabular}%
}
\end{center}
\end{table}

\begin{table}
 \caption{ Comparison of  $||\theta-\hat{\theta}||_2^2$ from composite likelihood and moment estimates on simulated large dataset for RCON model}
\label{tab1}

\begin{center}
{\scriptsize
\begin{tabular}{ccccc}
\hline\hline n  & p &comp & moment & \#true edges \\\hline
 
250 &  40& 0.2002 & 2.3671 & 256.7475\\ 
 & & (0.0757) &  (0.4580)  &  (15.5644) \\ 
250 &  60& 0.1109 & 5.6270 & 590.4040\\ 
 & & (0.0367) &  (0.7201)  &  (23.5606) \\ 
250 &  100& 0.0509 & 23.7040 & 1647.0707\\ 
 & & (0.0155) &  (2.0364)  &  (34.7461) \\ 
500 &  40& 0.0901 & 0.5482 & 256.7475\\ 
 & & (0.0272) &  (0.1439)  &  (15.5644) \\ 
500 &  60& 0.0588 & 1.0924 & 590.4040\\ 
 & & (0.0177) &  (0.1728)  &  (23.5606) \\ 
500 &  100& 0.0252 & 3.3530 & 1647.0707\\ 
 & & (0.0098) &  (0.2781)  &  (34.7461) \\ 
1000 &  40& 0.0467 & 0.1548 & 256.7475\\ 
 & & (0.0160) &  (0.0444)  &  (15.5644) \\ 
1000 &  60& 0.0282 & 0.2596 & 590.4040\\ 
 & & (0.0090) &  (0.0491)  &  (23.5606) \\ 
1000 &  100& 0.0125 & 0.6686 & 1647.0707\\ 
 & & (0.0037) &  (0.0684)  &  (34.7461) \\

\hline\hline
\end{tabular}%
}
\end{center}
\end{table}

\newpage
\begin{table}
 \caption{ Comparison of  the composite likelihood and moment estimates on simulated large dataset for RCOR model }
\label{tab2}

\begin{center}
{\scriptsize
\begin{tabular}{ccccccc}
\hline\hline n  & p &comp & moment & comp & moment &\#true edges \\
   &  &$||\hat{\rho}-\rho_{0})||_2$ & $||\tilde{\rho}-\rho_{0})||_2$ &$||\hat{\sigma}^{1/2}-\sigma_{0}^{1/2}||_2$ & $||\tilde{\sigma}^{1/2}-\sigma_{0}^{1/2}||_2$ & \\
\hline
250 &  40& 0.0317 & 0.0350 & 2.3869 & 2.2941 & 206.3200\\ 
 & & (0.0043) & ( 0.0050)  &  (0.0185) & (0.0179) & (13.5011)\\
250 &  60& 0.0196 & 0.0231 & 2.3886 & 2.2447 & 474.0400\\
 & & (0.0023) & ( 0.0029)  &  (0.0146) & (0.0149) & (22.0247) \\
250 &  100& 0.0097 & 0.0140 & 2.3905 & 2.1449 & 1316.9200\\ 
 & & (0.0015) & ( 0.0019)  &  (0.0118) & (0.0126) & (33.0795) \\
500 &  40& 0.0317 & 0.0350 & 0.9881 & 0.9226 & 206.3200\\ 
 & & (0.0043) & ( 0.0050)  &  (0.0131) & (0.0126) & (13.5011)\\
500 &  60& 0.0196 & 0.0231 & 0.9891 & 0.8874 & 474.0400\\ 
 & & (0.0023) & ( 0.0029)  &  (0.0103) & (0.0106) & (22.0247)\\ 
500 &  100& 0.0097 & 0.0140 & 0.9903 & 0.8167 & 1316.9200\\ 
 & & (0.0015) & ( 0.0019)  &  (0.0083) & (0.0089) & (33.0795)\\ 
1000 &  40& 0.0317 & 0.0350 & 0.0375 & 0.0615 & 206.3200\\ 
 & & (0.0043) & ( 0.0050)  &  (0.0062) & (0.0076) & (13.5011)\\ 
1000 &  60& 0.0196 & 0.0231 & 0.0301 & 0.0794 & 474.0400\\
 & & (0.0023) & ( 0.0029)  &  (0.0046) & (0.0071) & (22.0247)\\ 
1000 &  100& 0.0097 & 0.0140 & 0.0221 & 0.1255 & 1316.9200\\ 
 & & (0.0015) & ( 0.0019)  &  (0.0034) & (0.0063) & (33.0795)\\ 
\hline\hline
\end{tabular}%
}
\end{center}
\end{table}

\newpage
\begin{table}
\caption{Model selection performance of penalized composite likelihood based on 100 simulated datasets under each setting}
\label{tab1}

\begin{center}
{\scriptsize
\begin{tabular}{ccccccc}
\hline\hline n  & p &\#zero edge &\#true edges & fn &fp & tuning parameter \\\hline
 250 & 40 & 651.6300 &  120.5500& 27.8200 & 0.0000 & 1.2770\\ 
 &  & 7.7429  &12.9008  & (10.2152) &  (0.0000)  &  (0.3194) \\ 
250 & 60 & 1469.2300 &  323.0300& 2.3000 & 5.4400 & 1.4985\\ 
 &  & 19.5349  &16.4890  & (11.4111) &  (19.2518)  &  (0.2514) \\ 
500 & 40 & 651.6300 &  121.4700& 26.9000 & 0.0000 & 1.2650\\ 
 &  & 7.7429  &13.2432  & (10.6520) &  (0.0000)  &  (0.3705) \\ 
500 & 60 & 1474.0900 &  325.3300& 0.0000 & 0.5800 & 1.0910\\ 
 &  & 13.6929  &11.7903  & (0.0000) &  (5.8000)  &  (0.1961) \\ 
\hline\hline
\end{tabular}%
}
\end{center}
\end{table}

\begin{figure}[t]
\caption{Estimated between-cluster edges for treatment and control groups}
\vskip .1in
\begin{center}
\begin{tabular}{l}
\includegraphics[height=6in]{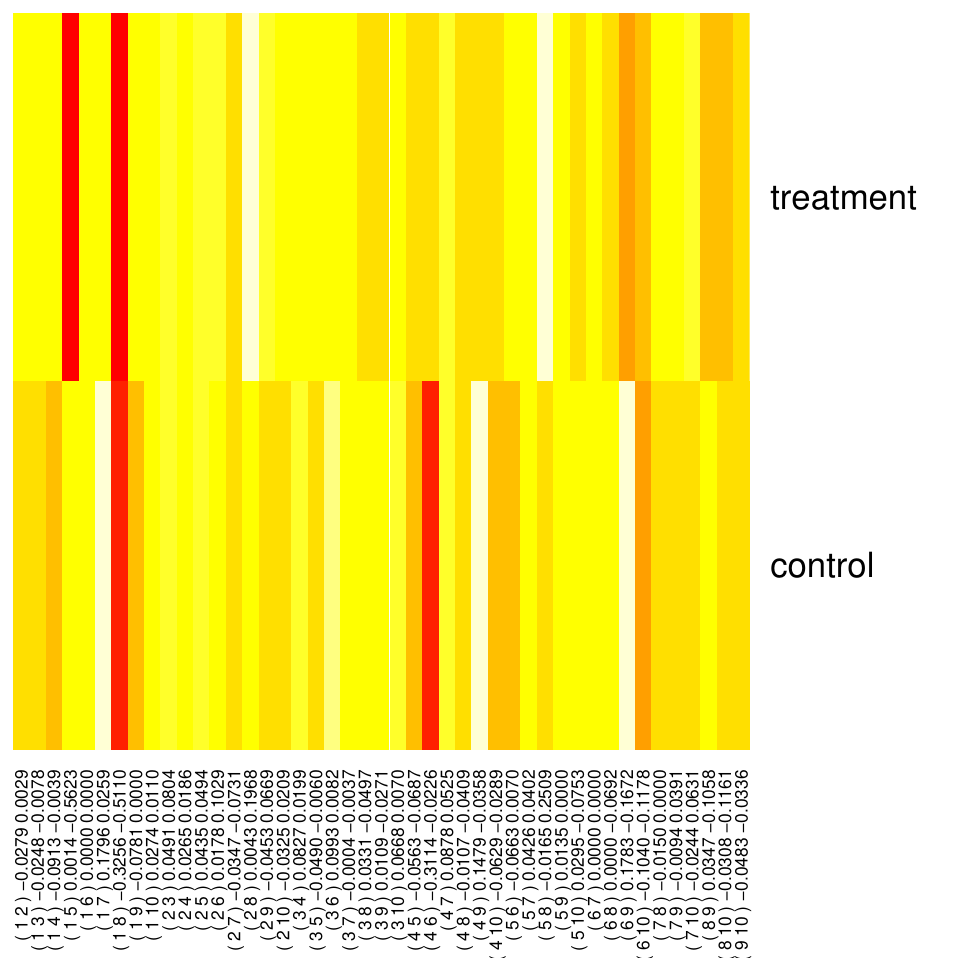}\\
{\small  Numbers in parenthesis indicate the cluster IDs, followed by the estimated $\hat{\theta}_{E_s}$}\\
{\small for the control and treatment groups. }
\end{tabular}
\end{center}
\end{figure}

\end{document}